\newcommand{\version}{September, 2017}
\numberwithin{equation}{section}
\newcommand{\bdm}{\begin{displaymath}}
\newcommand{\edm}{\end{displaymath}}
\newcommand{\bdn}{\begin{eqnarray}}
\newcommand{\edn}{\end{eqnarray}}
\newcommand{\bay}{\begin{array}{c}}
\newcommand{\eay}{\end{array}}
\newcommand{\ben}{\begin{enumerate}}
\newcommand{\een}{\end{enumerate}}
\newcommand{\beq}{\begin{equation}}
\newcommand{\eeq}{\end{equation}}
\newcommand{\beqn}{\begin{eqnarray}}
\newcommand{\eeqn}{\end{eqnarray}}
\newcommand{\lf}{\left}
\newcommand{\ri}{\right}
\newcommand{\rv}{\mathbf{r}}
\newcommand{\diff}{\mathrm{d}}
\newcommand{\eps}{\varepsilon}
\newcommand{\glm}{\Psi^{\mathrm{GL}}}
\newcommand{\aav}{\mathbf{A}}
\newcommand{\hex}{b}
\newcommand{\theo}{\Theta_0}
\newcommand{\glfk}{\mathcal{G}_{\kappa,h_{\mathrm{ex}}}^{\mathrm{GL}}}
\newcommand{\glfe}{\mathcal{G}_{\eps}^{\mathrm{GL}}}
\newcommand{\gled}{e_{\eps}^{\mathrm{GL}}}
\newcommand{\glee}{E_{\eps}^{\mathrm{GL}}}
\newcommand{\curv}{k(s)}
\newcommand{\pot}{V_{k,\alpha}}
\newcommand{\curl}{\mbox{curl}}
\newcommand{\half}{\mbox{$\frac{1}{2}$}}
\newcommand{\tx}{\textstyle}
\newcommand{\R}{\mathbb{R}}
\newcommand{\E}{\mathcal{E}}
\newcommand{\OO}{\mathcal{O}}
\newcommand{\Om}{\Omega}
\newcommand{\dd}{\partial}
\newcommand{\Hc}{H_{\mathrm{c}1}}
\newcommand{\Hcc}{H_{\mathrm{c}2}}
\newcommand{\Hccc}{H_{\mathrm{c}3}}
\newenvironment{argument}{\noindent \emph{Argument.}}{\hfill \qed \medskip}
\newtheorem{teo}{Theorem}[section]
\newtheorem{statement}{Claim}[section]
\theoremstyle{remark}
\newcommand{\fO}{f_0}
\newcommand{\fone}{\E^{\mathrm{1D}}}
\newcommand{\fc}{\E ^{\rm{corr}}}
\newcommand{\eone}{E^{\mathrm{1D}}}
\newcommand{\eoneo}{E ^{\rm 1D}_{0}}
\newcommand{\alO}{\alpha_0}
\begin{document}

\title{Universal and shape dependent features of surface superconductivity}

\author{Michele Correggi}
\affiliation{Dipartimento di Matematica, ``Sapienza'' Universit\`{a} di Roma, P.le Aldo Moro, 5, 00185, Rome, Italy}

\author{Bharathiganesh Devanarayanan}
\affiliation{Department of Physics and Astronomy, National Institute of Technology Rourkela, 769008 Odisha, India}

\author{Nicolas Rougerie}
\affiliation{Universit\'e Grenoble-Alpes \& CNRS ,  LPMMC (UMR 5493), B.P. 166, F-38 042 Grenoble, France}

\date{\version}

\begin{abstract}
We analyze the response of a type II superconducting wire to an external magnetic field parallel to it in the framework of Ginzburg-Landau theory. We focus on the surface superconductivity regime of applied field between the second and third  critical values, where the superconducting state survives only close to the sample's boundary. Our first finding is that, in first approximation, the shape of the boundary plays no role in determining the density of superconducting electrons. A second order term is however isolated, directly proportional to the mean curvature of the boundary. This demonstrates that points of higher boundary curvature (counted inwards) attract superconducting electrons. 
\end{abstract}

\maketitle


The response of a superconductor to an applied magnetic field is well-known to be both very rich physically and very important practically ~\cite{deGennes-66,Tinkham-75,TilTil-90,Leggett-06}. In between the Meissner state, where the superconductor totally expels a weak magnetic field, and the normal state at very strong field, where superconductivity is destroyed, different types of mixed states may occur.

For a type-II superconductor in an external magnetic field, there are three critical values of the field marking phase transitions in the state of the material. In increasing order, one should distinguish the first critical field $\Hc$, where bulk superconductivity starts being challenged by the emergence of quantized vortices, from the second and third critical fields, $\Hcc$ and $\Hccc$. Between these latter values, bulk superconductivity is lost altogether, but Cooper pairs of superconducting electrons can still survive close to the boundary. It is a natural question to wonder how this surface superconductivity phenomenon depends on the shape of the sample. 

For a type-I superconductor however the transition from the superconducting to the normal state is more abrupt (first order transition) and only one critical value of the applied field is expected to be relevant. In fact, the behavior of a type-I superconductor can be much richer and show some connections with that of type-II superconductors \cite{Metal}. In spite of some recent progress \cite{Letal}, the understanding of superconductivity exactly at threshold between type-I and type-II materials, i.e., for $ \kappa = 1/\sqrt{2} $ is yet to be understood.

In this note we consider an infinitely long superconducting wire of smooth cross-section $\Omega \subset \R ^2$, modeled by a 2D Ginzburg-Landau (GL) theory set in $\Omega$, with a perpendicular external magnetic field (thus parallel to the axis of the wire). We consider values of the magnetic field between $\Hcc$ and $\Hccc$, so that all the physics happens close to the boundary (denoted by $\dd \Omega$) of $\Omega$. We present results in two complementary directions:

\medskip

\noindent$\bullet$ Universality: to leading order in the limit of large GL parameter, the physics does not depend on any detail of the surface's shape. The order parameter's amplitude is roughly constant in the direction tangential to $\dd \Omega$.

\smallskip

\noindent$\bullet$ Shape-dependence: the next-to-leading order of the energy density is directly proportional to the boundary's local curvature. Regions of larger boundary curvature (counted inwards) attract more superconducting electrons.

\medskip

Universality-type results date as far back as Saint James and de Gennes' seminal papers ~\cite{JamGen-63,deGennes-66}. They argued that, at the onset of surface superconductivity, the problem can be mapped to a one-dimensional one in the direction normal to the boundary. Essentially, we extend this view to the full surface superconductivity regime (where linearization of the model is \emph{not} valid). Such theoretical predictions (as well as more elaborated models of surface superconductivity) have been verified experimentally several times in the past, for different superconducting materials  \cite{CKKSSS05,MGSJ95,Ning-09,Retal03,RS67,SPSKC64,TGLS10,TLGFMA06}.

As for shape-dependent results, the influence of boundary curvature on the value of the third critical field has been known for some time: when increasing the magnetic field, superconductivity survives longer where the boundary's curvature is maximal. This has been demonstrated~\cite{SchPee-98,SchPee-99,JadRubSte-99,FomMisDevMos-98,FomDevMos-98} first in the case of domains with corners (i.e. points where boundary curvature jumps) that we do not discuss in detail here\footnote{It can be shown~\cite{CorGia-17a} that, as for regular samples, the energy as well as the the density of Cooper pairs is universal and unaffected by the presence of corners to leading order. The corrections due to corners are expected to show up in the next-to-leading order terms~\cite{CorGia-17b}.}. The mathematical analysis of the third critical field for domains with corners may be found in~\cite{BonDau-06,BonFou-07} (see~\cite[Chapter 15]{FouHel-10} for review and more references). As for smooth domains, the influence of boundary curvature was derived in~\cite{BerSte-98,HelMor-04} (see also~\cite[Chapters~8 and~13]{FouHel-10}). 

In this note, we summarize rigorous mathematical results obtained in~\cite{CorRou-14,CorRou-16,CorRou-16b}, discuss their physical interpretation, and complement them with numerical estimates of the shape-dependent contributions to surface superconductivity. The universality-type results we obtained had precursors in the mathematics literature~\cite{Almog-04,AlmHel-06,FouHelPer-11,FouHel-05,FouHel-06,FouKac-11,LuPan-99,Pan-02} (see~\cite[Chapter~14]{FouHel-10} for a review). The shape-dependent results however seem to be the first of their kind to be valid in the full regime of surface superconductivity. In view of experimental results~\cite{Ning-09} on the imaging of the surface superconductivity layer, it would be particularly interesting to measure the influence of the boundary's curvature on the concentration of Cooper pairs that we predict here. 

\section{Setting and model}

Let $\Omega \subset \R^2$ be a bounded 2D domain with smooth boundary. Our starting point is the GL energy functional for an infinite wire of cross-section $\Omega$, 
\begin{multline}\label{eq:gl func}
	\glfk[\Psi,\aav] = \int_{\Om} \diff \rv \: \bigg\{ \lf| \lf( \nabla + i h_{\rm ex}  \aav \ri) \Psi \ri|^2 - \kappa^2 |\Psi|^2 \\ + \half \kappa^2 |\Psi|^4  + \lf(h_{\rm ex}\ri)^2 \lf| \curl \, \aav - 1 \ri|^2 \bigg\}.
\end{multline}
Here $\kappa,h_{\mathrm{ex}}$ are the GL parameter and the strength of the applied magnetic field (parallel to the wire), respectively. The order parameter is denoted $\Psi$ and $h_{\mathrm{ex}} \aav$ is the vector potential of the induced magnetic field.   

It is well-known that surface superconductivity occurs in the regime $h_{\mathrm{ex}} \propto \kappa ^2$. To study this regime it will be convenient to change units, setting
\begin{equation}\label{eq:ext field} 
h_{\mathrm{ex}} = b \kappa ^2 = \eps ^{-2} 
\end{equation}
for two new parameters $b,\eps$. In these units we get
\begin{multline}\label{eq:GL func eps}
	\glfe[\Psi,\aav] = \int_{\Om} \bigg\{ \bigg| \bigg( -i\nabla + \frac{\aav}{\eps^2} \bigg) \Psi \bigg|^2 - \frac{1}{ \hex \eps^2} |\Psi|^2 \\ + \frac{1}{2 \hex \eps^2}|\Psi|^4  + \frac{1}{\eps^4} \lf| \curl \, \aav - 1 \ri|^2 \bigg\}.
\end{multline}
We consider the associated ground state problem 
\begin{equation}
	\glee : = \min_{(\Psi, \aav)} \glfe[\Psi,\aav],
\end{equation}
and study the limit $\eps \to 0$ with 
\begin{equation}\label{eq:b regime}
1 < b < \theo ^{-1} \approx 1.69
\end{equation}
where de Gennes' constant
\begin{equation}\label{eq:de Gennes}
\theo := \min_{\alpha \in \R } \min_{  \int |u| ^2 = 1 } \int_{0} ^{+ \infty} \diff t \lf\{ |\dd_t u | ^2 + (t+\alpha) ^2 |u| ^2 \ri\} 
\end{equation}
is the minimal ground state energy of the shifted harmonic oscillator on the half-line. In terms of the GL parameter $\kappa$ the limit $\eps \to 0$ corresponds to the extreme type II case $\kappa \to \infty$. In view of~\eqref{eq:ext field}  the condition~\eqref{eq:b regime} corresponds to asking that the external magnetic field lies strictly between $\Hcc$ and $\Hccc$. The numerical range for $ b $ in~\eqref{eq:b regime} is set by the lowest eigenvalues for Schr\"odinger operators with constant magnetic field equal to $1$, in the plane and the half-plane respectively. Note that $\eps$ is the only small parameter at our disposal (stressing this point is the main reason for changing units). Only when $b\to \theo ^{-1}$ (i.e. at the upper limit of the surface superconductivity regime) does the problem become linear. 

In all the sequel we take the above model as our starting point. Clearly this demands that the full BCS theory that would be appropriate reduces to the simpler GL model. The main condition is that one be sufficiently close to the critical temperature (see~\cite{Gorkov-59,deGennes-66} for standard references and~\cite{FraHaiSeiSol-12,FraHaiSeiSol-16} for a more recent and mathematical treatment). Reduction from a 3D to a 2D model is valid in two situations:

\smallskip

\noindent $\bullet$  a very long wire with magnetic field parallel to the wire;

\smallskip
 
\noindent $\bullet$ a very thin film with perpendicular magnetic field. 

\smallskip

In the latter case, one should directly impose that the induced magnetic field coincides with the external one, $\curl\, \aav = 1$ (see e.g.~\cite{AlaBroGal-10} and references therein), which only simplifies the following arguments.

\section{Shape independence of the Ginzburg-Landau energy}\label{sec:shape ind}

The first result we discuss is a two-term expansion for the GL ground state energy. All its parameters can be calculated by minimizing the reduced 1D energy functional   
\begin{equation}\label{eq:1D func bis}
\fone_{0,\alpha}[f] : = \int_0^{+\infty} \diff t \lf\{ \lf| \partial_t f \ri|^2 + (t + \alpha )^2 f^2 - \tx\frac{1}{2b} \lf(2 f^2 - f^4 \ri) \ri\}
\end{equation}
both with respect to the normal density profile $f:\R^+ \mapsto \R$ and the total phase circulation $\alpha \in \R$.  This is a generalization of the linear problem~\eqref{eq:de Gennes}, introduced in~\cite{Pan-02}. We denote the minimal value of the 1D effective energy by $\eoneo$ and by $\fO,\alO$ a minimizing pair. The above problem relates to the original one via scaling lengths by a factor of $\eps$ (typical thickness of the superconducting layer) in the direction normal to the boundary and assuming a particular ansatz for the GL minimizer, discussed below.  

Concerning the ground state energy, our results in~\cite{CorRou-14,CorRou-16,CorRou-16b} can be summarized as follows:

\begin{teo}[\textbf{GL energy expansion}]\label{theo:main}\mbox{}\\
For any $1<b<\theo ^{-1}$, in the limit $\eps \to 0$,
\begin{multline}\label{eq:main estimate}
\glee = - \frac{1}{2b} \bigg(C_1(b) \frac{|\dd \Om|}{\eps} 
\\ +  C_2 (b) \int_{\dd \Om} k(s) \diff s \bigg)  + o(1),
\end{multline}
where 
\begin{align}\label{eq:lead ord}
C_1(b) &= - 2 b \eoneo = \int_{0} ^{+\infty} \diff t \: \fO ^4 > 0 
\\
\label{eq:curv corr} C_2 (b) &=  \tx\frac{2}{3}b \fO ^2 (0) - 2b \alO \eoneo.
\end{align}
\end{teo}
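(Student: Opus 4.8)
\emph{Strategy.} The plan is to prove~\eqref{eq:main estimate} by matching upper and lower bounds, each accurate up to $o(1)$. All the action takes place in a thin tubular neighborhood $\{0\le t\le c\,\eps|\log\eps|\}$ of $\dd\Om$, parametrized by the arclength $s$ along $\dd\Om$ and the normal distance $t$. In these boundary coordinates the Euclidean area element reads $\diff\rv=(1-t\,k(s))\,\diff s\,\diff t$ and the metric distorts the tangential derivative by the factor $(1-t\,k(s))^{-2}$; after rescaling $t=\eps\tau$ this becomes $\diff\rv=\eps\,(1-\eps\tau\,k(s))\,\diff s\,\diff\tau$. It is precisely the $O(\eps)$ correction $-\eps\tau\,k(s)$ in this Jacobian, together with the induced corrections in the covariant gradient and in the gauge, that produces the curvature-dependent term in~\eqref{eq:main estimate}, the leading term being the arclength integral of the one-dimensional energy density of~\eqref{eq:1D func bis}.

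\emph{Upper bound.} I would test $\glfe$ on a pair $(\trial,\atrial)$ supported in the layer, with $\atrial$ a local model of the (gauge-equivalence class of the) vector potential having $\curl\atrial=1$ --- this choice is essentially forced, since in~\eqref{eq:GL func eps} any deviation $\curl\aav-1$ is penalized by the factor $\eps^{-4}$ --- and $\trial=\cut(t)\,\fO(\tau)\,e^{i\Phi}$, where $\cut$ cuts off smoothly at the edge of the layer and the phase $\Phi$ realizes the optimal circulation $\alO$ (it carries a harmless $2\pi$-mismatch at the single point where $\dd\Om$ closes up, negligible after integration). Inserting this ansatz into $\glfe$, expanding the area element and the covariant derivative in powers of $\eps\tau$, and using the variational equation satisfied by $(\fO,\alO)$ for~\eqref{eq:1D func bis} to simplify the $O(1)$ term (an integration by parts in $\tau$ reduces it to a boundary value of $\fO$), one obtains $\glee\le-\tfrac1{2b}\big(C_1(b)\,|\dd\Om|/\eps+C_2(b)\int_{\dd\Om}k\,\diff s\big)+o(1)$ with $C_1,C_2$ as in~\eqref{eq:lead ord}--\eqref{eq:curv corr}.

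\emph{Lower bound.} This is the substantial half; I would organize it in five steps. (a)~\emph{A priori control of a minimizer} $(\glm,\aavm)$: the bound $\|\glm\|_\infty\le1$; an Agmon-type estimate showing $|\glm|$ decays exponentially on the scale $\eps$ as one moves inward from $\dd\Om$, so that the bulk of $\Om$ contributes only $o(1)$ to $\glee$; and quantitative smallness of $\|\curl\aavm-1\|$, allowing $\aavm$ to be replaced inside the layer by the canonical potential at the cost of $o(1)$. (b)~\emph{Localization}: partition $\dd\Om$ into $\sim|\dd\Om|/\spac$ arcs of length $\spac$ with $\eps\ll\spac\ll1$, tuned so that $k(\cdot)$ is constant up to $o(1)$ on each arc while the localization error from cutting $\glm$ stays $o(\eps)$ per arc. (c)~\emph{Cell-wise reduction}: on the $n$-th cell freeze $k\equiv\kj$, rescale $t=\eps\tau$, and bound the local energy from below by $(\spac/\eps)$ times the minimum over $(f,\alpha)$ of a one-dimensional functional that coincides with~\eqref{eq:1D func bis} up to a curvature perturbation of relative size $\eps\kj$; the reduction from the genuinely two-dimensional cell functional to a one-dimensional one rests on showing that $|\glm|$ depends only weakly on $s$ --- extracted via an energy-splitting identity around a reference profile --- so that replacing it by its $s$-average loses nothing at this order. (d)~\emph{Perturbation in the curvature}: expanding that one-dimensional minimum to first order in the perturbation, a Feynman--Hellmann computation at the unperturbed minimizer $(\fO,\alO)$ identifies the first-order coefficient as $-\tfrac1{2b}\big(\tfrac23 b\,\fO^2(0)-2b\,\alO\,\eoneo\big)=-\tfrac1{2b}C_2(b)$, so the $n$-th cell contributes $(\spac/\eps)\,\eoneo-\tfrac1{2b}C_2(b)\,\spac\,\kj+o(\eps)$. (e)~\emph{Riemann sum}: $\sum_n(\spac/\eps)\,\eoneo\to(|\dd\Om|/\eps)\,\eoneo=-\tfrac1{2b}C_1(b)\,|\dd\Om|/\eps$ and $\sum_n\spac\,\kj\to\int_{\dd\Om}k(s)\,\diff s$, which yields the matching lower bound.

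\emph{Main obstacle.} The crux is step~(c): decoupling the tangential variable to reduce the two-dimensional cell problem to the one-dimensional functional~\eqref{eq:1D func bis}, while simultaneously tracking the curvature at the correct order and controlling the magnetic gauge over a region of mesoscopic size. Because the per-cell errors are summed over $\sim\eps^{-1}$ cells, each of them --- from freezing $k$, from replacing $|\glm(s,\cdot)|$ by its $s$-average, from replacing the phase by the optimal linear ansatz, and from the gauge substitution --- must be $o(\eps)$; this forces a careful tuning of $\spac$ and full use of the exponential decay estimates, and is where the real work lies. The remaining difficulty is more bookkeeping than conceptual: carrying the chain of $\eps$-expansions (Jacobian, covariant derivative, cutoff, gauge) consistently to second order in \emph{both} bounds, so that their $O(1)$ terms genuinely coincide.
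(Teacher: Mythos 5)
Your overall architecture coincides with the one used in the paper and in the works it summarizes: reduction to a boundary layer of thickness $O(\eps|\log\eps|)$ via Agmon-type decay and replacement of the induced field by the applied one, passage to boundary coordinates whose Jacobian $(1-\eps k(s)\tau)$ carries the curvature correction, reduction to the 1D functional~\eqref{eq:1D func bis}, first-order (Feynman--Hellmann) perturbation theory in $\eps k$ to extract $C_2(b)$, and a decomposition of $\dd \Om$ into cells where $k$ is frozen, resummed as a Riemann sum. So there is no divergence of route to report.

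There is, however, a genuine gap at exactly the point you flag as the crux, step~(c), and the mechanism you propose there is not the right one. The danger in the tangential decoupling is not that $|\glm|$ might vary in $s$ (so that an $s$-average would be harmless); it is that $\glm$ might develop \emph{phase singularities} (vortices) inside the boundary layer, which would invalidate the ansatz $f(\tau)e^{-i\alpha s/\eps}$ while costing little in modulus variation. The paper's proof handles this by writing $\glm \approx \fO(\tau)\,e^{-i\alO s/\eps}\,v(s,\tau)$ and showing that the excess energy of a non-constant $v$ is bounded below by $\int (\fO^2(\tau)+F_0(\tau))|\nabla v|^2$, where $F_0(\tau)=2\int_0^\tau(\eta+\alO)\fO^2(\eta)\,\diff\eta$ arises from integrating by parts the a priori sign-indefinite cross term between the reference current and the current of $v$. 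The whole lower bound hinges on the pointwise positivity of the cost function $\fO^2+F_0$, a nontrivial fact proved in~\cite{CorRou-14} which holds precisely for $1\le b\le\theo^{-1}$ --- this is where the hypothesis $b>1$ of the theorem enters, a hypothesis your argument never invokes. Without this (or an equivalent vortex-exclusion mechanism), step~(c) does not close: an energy-splitting identity alone leaves a cross term of unknown sign, and replacing $|\glm(s,\cdot)|$ by its $s$-average says nothing about the phase. Once this ingredient is supplied, the rest of your outline matches the paper's argument; note also that for the upper bound your trial phase must absorb a winding of order $\eps^{-2}$ coming from the circulation of the vector potential along $\dd\Om$ (the phase $\phi_\eps$ in~\eqref{eq:intro GLm formal refined}), not merely a single $2\pi$ mismatch where the boundary closes up.
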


\begin{proof}[Sketch of proof]
The methods of~\cite[Chapters~11 and~12]{FouHel-10} allow to prove that the order parameter and energy density are very small at distances larger than $\OO (\eps)$ from the boundary. Also, the induced magnetic field does not differ significantly from the applied one. Changing to scaled boundary coordinates $(s,t)=$(tangential, $\eps^{-1}\times$ normal), one is then lead to consider a scaled energy functional 
\begin{multline}\label{eq:intro GL func bound}
	\mathcal{G}[\psi] : = \int_0^{|\partial \Omega|} \diff s \int_0^{c_0 |\log\eps|} \diff t \lf(1 - \eps k(s) t \ri) \\ \lf\{ \lf| \partial_t \psi \ri|^2 + \frac{1}{(1- \eps \curv t)^2} \lf| \lf( \eps \partial_s - i t + \half i \eps k(s) t ^2 \ri) \psi \ri|^2 \ri.	\\	
	\lf. - \frac{1}{2 \hex} \lf[ 2|\psi|^2 - |\psi|^4 \ri]  \ri\},
\end{multline}
where $c_0$ is a fixed, essentially arbitrary, constant. The local curvature (counted inwards) $s\mapsto k(s)$ of the boundary  arises from the change of variables. The upshot is that, in first approximation (setting $\eps = 0$ in~\eqref{eq:intro GL func bound}), the curvature of the sample's boundary can be neglected. The whole sample is then mapped to a half-plane. To proceed further and get more precise results, one may instead approximate the boundary at each point by the auscultating circle of radius $k (s) ^{-1}$.

The particular case $|k| = R^{-1} = \mathrm{const.}$ for the above functional corresponds to the surface energy in a disk of radius $R$ (or the exterior thereof if $k<0$). Since all terms involving the $s$-coordinate in~\eqref{eq:intro GL func bound} come with $\eps$ pre-factors, it makes sense to guess an ansatz of the form $f(t) e ^{-i \frac{\alpha}{\eps} s}$ with a constant $\alpha$ to solve the disk problem. This leads to considering the reduced functional
\begin{multline}
\label{eq:1D func}
\fone_{k,\alpha}[f] : = \int_0^{c_0|\log\eps|} \diff t (1-\eps k t )\big\{ \lf| \partial_t f \ri|^2 + \pot(t) f^2 \\ - \tx\frac{1}{2b} \lf(2 f^2 - f^4 \ri) \big\},
\end{multline}
where
\begin{equation}
	\label{eq:pot}
	\pot(t) : = \frac{(t + \alpha - \frac12 \eps k t ^2 )^2}{(1-\eps k t ) ^2}.
\end{equation}
This is a generalization of~\eqref{eq:1D func bis} to the case of non-zero curvature. The main step of the proof is to vindicate the exactness of the ansatz $\psi (s,t) = f(t) e ^{-i \frac{\alpha}{\eps} s}$ for the disk problem ($k = \mathrm{const.}$) in~\eqref{eq:intro GL func bound}. 

For simplicity we explain this in the case $k=0$, corresponding to the degenerate case of a half-plane sample. The method is to write a tentative minimizer in the form 
$$\psi (s,t) = \fO (t) e^{-i \frac{\alO s}{\eps}} v(s,t)$$
and bound from below the excess energy due to a possibly non constant $v$ in the manner
$$
 \E_0 [v] \geq \int_0 ^{|\dd \Omega|} \diff s \int_{0} ^{+\infty} \diff t \:  \left(\fO ^2(t)  + F_0 (t) \right) \lf| \nabla v \ri|^2,
$$
with a potential function 
$$ F_0 (t) = 2 \int_{0} ^{t} \diff \eta \: (\eta + \alO)\fO ^2 (\eta).$$
The ``cost function'' $\fO ^2(t)  + F_0 (t) $ can be interpreted as a lower bound to the kinetic energy density generated by a vortex located at distance $t$ from the boundary. In~\cite{CorRou-14} we prove that $\fO ^2(t)  + F_0 (t) \geq 0$ for any $t$, provided $1\leq b \leq \theo^{-1}$, thus concluding that a vortex, or more generally any non constant $v$, would increase the energy. The ground state energy of the half-plane problem is thus simply given by $|\dd \Omega| \eoneo$. The disk case follows similar considerations, with significant but technical additional difficulties.

Relying on the previous observations, we get a kind of ``adiabatic'' decoupling where the order parameter is for any $s$ in the ground state of the problem in the direction perpendicular to the boundary. This reduces matters to looking for the energy $\eone_\star \left(k\right)$ obtained by minimizing~\eqref{eq:1D func} with respect to both $\alpha\in \R$ and $f:\R^+ \mapsto \R$. The rationale is that a true GL minimizer can, in a suitable gauge, be approximated in the manner 
\begin{multline}\label{eq:intro GLm formal refined}
\glm(\rv) = \glm (s,\tau) \approx f_{k(s)} \left(\tx \frac{\tau}{\eps} \right)  \exp \left( - i \alpha (k(s)) \tx \frac{s}{\eps}\right) \\ \exp\lf( i \phi_{\eps}(s,\tau) \ri) 
\end{multline}
with $f_{k(s)},\alpha(k(s))$ a minimizing pair for the energy functional~\eqref{eq:1D func} at curvature $k = k(s)$ and $(s,\tau) = $ (tangential coordinate, unscaled normal coordinate). Here $\phi_\eps$ is a phase factor accounting for a uniform vortex density proportional to $\eps ^{-2}$ in the bulk of the sample, which is required to compensate the huge magnetic field inside the sample. Such a large circulation gets corrected by a relatively small amount $- \alpha(k(s)) \eps ^{-1}$ in order to optimize the boundary energy. This form includes subleading order corrections, and its main feature is the decoupling of scales/coordinates.

The final expression~\eqref{eq:main estimate} follows from first order perturbation theory applied to the 1D functional~\eqref{eq:1D func}. Since we work in the regime $\eps \to 0$, one can expand $\eone_\star \left(k(s)\right)$ in powers of $\eps$. The leading order is given by the $k=0$ functional and the first correction by $- \eps k  $ times 
\begin{multline}\label{eq:corr func}
 \fc _{\alpha_0} [f_0] :=  \int_{0} ^{c_0 |\log \eps|} \diff t \:  t\bigg\{ \lf| \partial_t f_0 \ri|^2 \\ 
+ f_0^2 \left( -\alpha_0 (t+\alpha_0) -\frac{1}{b} + \frac{1}{2b} f_0 ^2\right)\bigg\}.
\end{multline}
That $C_1 (b)$ is equal to the expression in the right-hand side of~\eqref{eq:lead ord} is a simple consequence of the variational equation satisfied by $\fO$, while showing that \eqref{eq:corr func} coincides with \eqref{eq:curv corr} divided by $ 2b $ requires a few non trivial computations that we skip for brevity. 
\end{proof}

For $b \to \theo^{-1}$, the half-plane problem can be linearized, and the proof is much simpler. Indeed, for the linear problem, the ansatz $\fO (t) e^{-i \frac{\alO}{\eps} s}$ is clearly optimal (as used first in~\cite{JamGen-63} and further justified in the mathematics literature \cite[Chapter~13]{FouHel-10}). This is because the Hamiltonian of the linear problem commutes with translations in the tangential coordinate, so that one can perform a Fourier series decomposition in this direction. This nice structure is broken by the nonlinear term. In this respect, one may see Theorem~\ref{theo:main} as an extension  to the nonlinear regime $b<\theo ^{-1}$ of the linear analysis pioneered in~\cite{JamGen-63}. 

The energy asymptotics~\eqref{eq:intro GLm formal refined} also implies estimates of the amplitude of the order parameter and the phase circulation around the boundary. Indeed, $|\glm|^2$ can be shown to be close to $|\fO \left(\tx \frac{\tau}{\eps} \right)| ^2$ \emph{pointwise} in the surface superconductivity layer. This indicates that no defect can be present therein, in particular no vortex. 

As we have anticipated, the curvature effects emerge in the next-to-leading order corrections to the density of superconductivity. However, in the energy expansion~\eqref{eq:main estimate}, the second term is universal and independent of the sample's shape: by the Gauss-Bonnet theorem, we have that 
$$
\glee = \frac{|\dd \Om|\eoneo}{\eps} -\frac{\pi}{b}  C_2 (b) + o(1),
$$
for the integral of the curvature in~\eqref{eq:main estimate} is just $ 2\pi $ times the Euler characteristic of the domain $ \Omega$, equal to $1$. In the next section we turn to describing the most explicit form of curvature dependence we are able to derive.

\section{Shape dependence in the distribution of superconductivity}

The main result we obtained in~\cite{CorRou-16b} is stated in terms of the distribution of $|\glm| ^4$, the square of the Cooper pairs' (normalized) density. For obvious physical reasons it would be desirable to obtain instead an estimate of $|\glm| ^2$, but this does not follow straightforwardly from our methods. 

\begin{teo}[\textbf{Curvature dependence of $\glm$}]\label{theo:main bis}\mbox{}\\
Let $\glm$ be a GL minimizer and $D\subset \Omega$ be a measurable set intersecting $\dd \Omega$ with right angles. For any $1<b<\theo ^{-1}$, in the limit $\eps \to 0$,
\begin{multline}\label{eq:main estimate curv}
\int_{D} \diff \rv \: |\glm| ^4 = \eps \, C_1(b) |\dd \Om \cap \dd D| \\ 
+ \eps ^2 C_2 (b) \int_{\dd D\cap \dd \Om} k(s) \diff s + o(\eps ^2),
\end{multline}
using the notation of Theorem~\ref{theo:main}.
\end{teo}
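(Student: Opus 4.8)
The natural strategy is a Feynman--Hellmann (coupling-constant) argument anchored to Theorem~\ref{theo:main}. For $\delta$ in a fixed small neighbourhood of $0$, let $\glee(\delta)$ be the ground state energy of the functional obtained from \eqref{eq:GL func eps} by multiplying the quartic coefficient $\frac{1}{2\hex\eps^2}$ by $(1+\delta\,\one_D)$. Testing this perturbed functional on a GL minimizer $(\glm,\aav)$, and conversely $\glfe$ on a minimizer of the perturbed problem, gives
$$
2\hex\eps^2\,\frac{\glee(\delta)-\glee}{\delta}\Big|_{\delta>0}\ \leq\ \int_D|\glm|^4\ \leq\ 2\hex\eps^2\,\frac{\glee(\delta)-\glee}{\delta}\Big|_{\delta<0}.
$$
It therefore suffices to show that $2\hex\eps^2\,(\glee(\delta)-\glee)/\delta$ equals the right-hand side of \eqref{eq:main estimate curv} up to an error $o(\eps^2)$ that is uniform for $\delta$ in a fixed small interval about $0$. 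The two-sided form is forced by the absence of a uniqueness statement for the GL minimizer.

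The second step is to re-run the proof of Theorem~\ref{theo:main} for the perturbed functional. The confinement of $\glm$ to an $\OO(\eps)$-layer near $\dd\Om$, the estimate on the induced field, and the passage to scaled boundary coordinates $(s,\tau/\eps)$ are insensitive to the perturbation; the only change in the scaled functional \eqref{eq:intro GL func bound} is an extra term proportional to $\delta\,\one_{\{s\in\dd D\cap\dd\Om\}}\,|\psi|^4$. Here the right-angle hypothesis enters: it ensures that, inside the layer, $D$ coincides with the product of the arc $\dd D\cap\dd\Om$ by the normal direction, up to a region near the two endpoints of the arc whose area is of higher order in $\eps$. Carrying out the adiabatic reduction then leads, at each $s\in\dd D\cap\dd\Om$, to the one-dimensional functional \eqref{eq:1D func} with $\frac{1}{2b}f^4$ replaced by $\frac{1}{2b}(1+\delta)f^4$. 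First-order perturbation theory in $\delta$ for this 1D problem---the variation with respect to the optimal $\alpha$ dropping out by minimality---yields at $\delta=0$ a contribution proportional to $\int_0^{c_0|\log\eps|}(1-\eps k(s)\,t)\,f_{k(s)}^4\,\diff t$, so that $\int_D|\glm|^4=\eps\int_{\dd D\cap\dd\Om}\diff s\int_0^{+\infty}(1-\eps k(s)t)f_{k(s)}^4\,\diff t+o(\eps^2)$.

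It remains to expand in $\eps$, exactly as in Theorem~\ref{theo:main}: writing $f_{k}=\fO+\eps k\,(\text{correction})+\OO(\eps^2)$ via curvature perturbation theory for \eqref{eq:1D func}, the leading term is $\eps\,C_1(b)\,|\dd D\cap\dd\Om|$ with $C_1(b)=\int_0^{+\infty}\fO^4$, and the $\OO(\eps^2)$ term is recast as $\eps^2\,C_2(b)\int_{\dd D\cap\dd\Om}k(s)\,\diff s$ upon using the variational equation for $\fO$ together with the identifications \eqref{eq:lead ord}--\eqref{eq:curv corr}; these are the same non-trivial computations skipped in the proof of Theorem~\ref{theo:main}, and we skip them once more.

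The core difficulty is in the first two steps: to capture the curvature term one needs the $\eps^0$ (curvature) contribution to the perturbed energy $\glee(\delta)$---that is, the full two-term expansion of Theorem~\ref{theo:main}---with $o(1)$ remainder \emph{uniformly} in $\delta$ on a fixed interval. This forces a re-examination of the whole analysis of~\cite{CorRou-14,CorRou-16,CorRou-16b} with the localized perturbation in place, tracking the $\delta$-dependence of every error; in particular one must show that the corner-type regions where $\dd D$ detaches from $\dd\Om$---the only place the right-angle hypothesis is used---contribute to the $\delta$-linear part of $\glee(\delta)$ at an order below that of the curvature term. A softer argument, using only that $|\glm|^2$ is pointwise close to $\fO(\tau/\eps)^2$ in the layer (cf.\ the discussion after Theorem~\ref{theo:main}), already delivers the leading term $\eps\,C_1(b)\,|\dd D\cap\dd\Om|$ but is not precise enough for the curvature correction.
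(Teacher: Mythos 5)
Your strategy is genuinely different from the paper's. The paper does not perturb the functional at all: it first establishes a \emph{local} version of Theorem~\ref{theo:main}, namely the two-term expansion for the energy density of the unperturbed minimizer integrated over $D$, and then converts local energy into $\int_D |\glm|^4$ by multiplying the Ginzburg--Landau variational equation by $\overline{\glm}$ and integrating by parts over $D$, which yields the approximate identity $\gled \approx -\frac{1}{2b\eps^2}|\glm|^4$ up to boundary terms of the form $\int_{\dd D} \overline{\glm}\,\dd_\nu \glm$. The right-angle hypothesis is used there, and only there: for such $D$ the normal to $\dd D$ is tangential to $\dd\Om$ inside the boundary layer, where $|\glm|$ varies slowly, so these boundary terms are negligible. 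Your coupling-constant bounds are correct as inequalities, and re-running the analysis for the perturbed functional is in principle conceivable, so the architecture is a legitimate alternative in spirit; but it is not the paper's, and it runs into a quantitative obstruction that the proposal does not resolve.

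The gap is the gap between a difference quotient and a derivative. Since $\glee(\delta)$ is an infimum of functions affine in $\delta$, it is concave, and its second derivative in $\delta$ is generically of order $\eps^{-1}$ (order one per unit length of $\dd D\cap\dd\Om$ for the perturbed 1D problem, times $|\dd D \cap \dd\Om|/\eps$). Hence for fixed $\delta$ one has
\begin{equation*}
2\hex\eps^2\,\frac{\glee(\delta)-\glee}{\delta} \;=\; 2\hex\eps^2\,\dd_\delta\glee(0^{\pm}) \;+\; O(\delta\,\eps),
\end{equation*}
and the $O(\delta\eps)$ term swamps the curvature contribution $\eps^2 C_2(b)\int k$ as soon as $\delta \gg \eps$. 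So the assertion that the difference quotient equals the right-hand side of \eqref{eq:main estimate curv} up to $o(\eps^2)$ \emph{uniformly for $\delta$ in a fixed interval} is not merely unproved, it is false. To repair this one must send $\delta\to0$ with $\eps$; balancing the $O(\delta\eps)$ nonlinearity error against the $2\hex\eps^2\,o(1)/\delta$ contribution of the remainder in the (perturbed, uniform-in-$\delta$) energy expansion forces $\delta=o(\eps)$ and a remainder $o(\delta)=o(\eps)$ --- i.e.\ a version of Theorem~\ref{theo:main} with remainder $o(\eps)$ instead of $o(1)$, a substantially stronger input than anything available here. Alternatively one would need a direct bound on $\dd_\delta^2\glee$, which is a nontrivial stability estimate. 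The paper's route through the variational equation sidesteps this entirely: it needs only the local $o(1)$ energy expansion, which your step two would in any case have to reprove.
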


\begin{proof}[Sketch of proof]
Theorem~\ref{theo:main} is actually proved in a local fashion. Denoting $\gled(\rv)$
the Ginzburg-Landau energy density of a ground state, we in fact obtain, for any domain $D\subset \Omega$, a version of~\eqref{eq:main estimate} for the energy density integrated over $D$. The main idea is then that, neglecting the magnetic kinetic energy (last term in~\eqref{eq:GL func eps}), we essentially have 
\begin{equation}\label{eq:ener dens app}
\gled \approx -\frac{1}{2b\eps ^2} |\glm| ^4
\end{equation}
and thus the result. The above (approximate) identity follows by multiplying the GL equation by $\glm$ and integrating the result over a domain $D\subset \Omega$. Boundary terms however arise due to integrating the kinetic energy by parts, involving the derivative of $\glm$ in the direction normal to $\dd D$. They are negligible for domains intersecting $\dd \Omega$ with $\pi/2$ angles because, for such domains, this direction is tangential to the boundary $\dd \Omega$ of the sample. The variations of $|\glm|$ in this direction are very slow, as can be guessed from~\eqref{eq:intro GLm formal refined}. 
\end{proof}

Theorem~\ref{theo:main bis} is almost as far as we can get with an analytic approach. It is however not quite obvious to determine the sign of $C_2(b)$ from the expression we obtained, i.e. to conclude whether larger curvature favors superconductivity, or the other way around. The expression~\eqref{eq:curv corr} can however be evaluated by numerically solving a rather simple variational problem. We next report on the conclusions of such an investigation:

\begin{statement}[\textbf{Curvature favors superconductivity}]\label{sta:numerics}
For any $1<b<\theo ^{-1}$, we find that $C_2 (b) > 0$. Consequently, the distribution of surface superconductivity along the boundary is an increasing function of the local curvature (counted inwards).
\end{statement}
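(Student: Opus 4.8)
The plan is to reduce $C_2(b)>0$ to a single scalar inequality for the one-dimensional minimizer $(\fO,\alO)$ of~\eqref{eq:1D func bis}, and then to establish that inequality by quantitative control of the minimizer. Two facts follow at once from the variational characterization: optimality in $\alpha$ gives $\int_0^{+\infty}(t+\alO)\,\fO^2\,\diff t=0$, so in particular $\alO<0$; and the positive minimizer solves the Euler--Lagrange equation $-\dd_t^2\fO+(t+\alO)^2\fO-b^{-1}\fO+b^{-1}\fO^3=0$ on $(0,+\infty)$ with the natural Neumann condition $\dd_t\fO=0$ at $t=0$. Multiplying this equation by $\dd_t\fO$, integrating over $(0,+\infty)$, and using the Neumann condition together with $\int_0^{+\infty}(t+\alO)\fO^2\,\diff t=0$ (which cancels the only contribution that is not a boundary term), one is left with the closed Pohozaev-type identity
\[
\fO^2(0)=2\,(1-b\,\alO^2).
\]
Inserting this and $-2b\,\eoneo=\int_0^{+\infty}\fO^4\,\diff t$ from~\eqref{eq:lead ord} into~\eqref{eq:curv corr} gives $C_2(b)=\tfrac{4}{3} b\,(1-b\alO^2)+\alO\int_0^{+\infty}\fO^4\,\diff t$, so that, since $\alO<0$,
\[
C_2(b)>0\quad\Longleftrightarrow\quad -\,\eoneo\;<\;\frac{2\,(1-b\,\alO^2)}{3\,|\alO|}\,.
\]

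What remains is a control, uniform in $b\in(1,\theo^{-1})$, of the pair $(\alO,\eoneo)$, and several a priori facts should be exploited. Evaluating the Euler--Lagrange equation at a point where $\fO$ is maximal gives $\|\fO\|_{L^\infty}^2\le 1$; with the Pohozaev identity this localizes the shift, $\tfrac{1}{\sqrt{2b}}\le|\alO|<\tfrac{1}{\sqrt{b}}$, and it also gives $\int_0^{+\infty}\fO^4\le\|\fO\|_{L^2}^2$. Hence the Feynman--Hellmann formula $\diff\eoneo/\diff b=\big(2\|\fO\|_{L^2}^2-\int_0^{+\infty}\fO^4\big)/(2b^2)>0$ shows that $\eoneo$ is strictly increasing in $b$, with $\eoneo\uparrow 0$ as $b\uparrow\theo^{-1}$. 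Finally, as $b\uparrow\theo^{-1}$ the half-plane problem linearizes around the profile realizing $\theo$ in~\eqref{eq:de Gennes} (with optimal shift $\alO\to-\sqrt{\theo}$): the profile collapses, $\|\fO\|_{L^\infty}\to 0$, while $\fO^2(0)=2(1-b\alO^2)$ remains comparable to $\|\fO\|_{L^2}^2$, so $\int_0^{+\infty}\fO^4=o\big(\fO^2(0)\big)$ and $C_2(b)\sim\tfrac{2}{3}\theo^{-1}\fO^2(0)>0$ in a neighbourhood of the endpoint.

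To conclude one would combine these ingredients over the compact interval $[1,\theo^{-1}]$, in one of two ways. Analytically, one upgrades the crude bounds $\int_0^{+\infty}\fO^4\le\|\fO\|_{L^2}^2$ and $|\alO|<b^{-1/2}$ to sharp ones --- e.g.\ an upper bound on $\|\fO\|_{L^2}^2$ (hence on $-\eoneo$) obtained by projecting $\fO$ onto the ground state of~\eqref{eq:de Gennes} and using the spectral gap of $-\dd_t^2+(t+\alO)^2$, together with a matching lower bound on $1-b\alO^2$ --- and then exploits the monotonicity of $\eoneo$ in $b$ to reduce the verification to the single endpoint $b=1$. Alternatively, in the spirit of the numerics reported in the claim, one gives a rigorous computer-assisted proof: $(\fO,\alO)$ solves a scalar shooting problem with the single parameter $b$, and the displayed inequality can be validated by interval arithmetic along $b\in[1,\theo^{-1}]$ with certified enclosures of $\alO$, $\fO^2(0)$ and $\int_0^{+\infty}\fO^4$.

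The main obstacle is that no closed form for $(\fO,\alO)$ is available, while the reduced inequality is \emph{asymptotically an equality}: both of its sides vanish as $b\to\theo^{-1}$, and it is only the leading-order linearized behaviour there that reveals the favourable sign. Away from the endpoint no such hierarchy is granted for free, so proving that $\tfrac{4}{3} b(1-b\alO^2)$ always dominates $|\alO|\int_0^{+\infty}\fO^4$ requires genuinely sharp, $b$-uniform estimates on the one-dimensional minimizer --- this quantitative control is the heart of the matter, and is precisely what the numerics underlying the claim supply non-rigorously.
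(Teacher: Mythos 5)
Your algebra is correct and your route is genuinely different from the paper's, so a comparison is in order. The paper supports this Claim by an explicitly numerical \emph{Argument}: it solves the 1D problem \eqref{eq:1D func bis} with GPELab, alternating minimization in $f$ with Feynman--Hellmann updates of $\alpha$, and simply evaluates \eqref{eq:curv corr} on the converged pair, checking consistency against the second identity in \eqref{eq:lead ord} and against $C_2(b)=2b\,\fc_{\alO}[f_0]$; the only analytic ingredient is the cited perturbative result \cite[Lemma~2.3]{CorRou-16b} near $b=\theo^{-1}$. You instead first exploit the Pohozaev identity $\fO^2(0)=2(1-b\alO^2)$ (which the paper indeed records, quoting \cite[Proof of Lemma~3.3]{CorRou-14}, and your derivation of it is correct) to rewrite
\begin{equation*}
C_2(b)=\tfrac{2}{3}\,b\,\fO^2(0)+\alO\int_0^{+\infty}\fO^4\,\diff t ,
\end{equation*}
which cleanly isolates the competition between a positive boundary term and a negative term proportional to $\alO<0$; your a priori bounds ($\|\fO\|_\infty\le 1$, the localization of $\alO$, monotonicity of $\eoneo$ in $b$) and your endpoint asymptotics reproducing $C_2(b)\sim\tfrac23 b\,\fO^2(0)>0$ near $b=\theo^{-1}$ are all correct and consistent with the paper. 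What your reformulation buys is a single transparent scalar inequality and a plausible path to a certified (interval-arithmetic) proof; what it does not buy --- as you candidly acknowledge --- is the $b$-uniform quantitative control needed away from the endpoint, and that is exactly the gap the paper fills non-rigorously by computation. In short, neither argument is a complete analytic proof over all of $(1,\theo^{-1})$; yours identifies more precisely \emph{what} must be checked, the paper's checks it numerically. One caution: your reduced inequality degenerates at \emph{both} ends of the claimed range if one tries to push to $b=1$ only via crude bounds, since $\fO^2(0)$ and $\int\fO^4$ are of comparable size there, so the proposed ``reduce to the endpoint $b=1$ by monotonicity'' step would still require a genuinely sharp estimate at $b=1$ that you have not supplied.
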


\begin{figure}\label{fig:ener corr}
\includegraphics[width=9cm]{./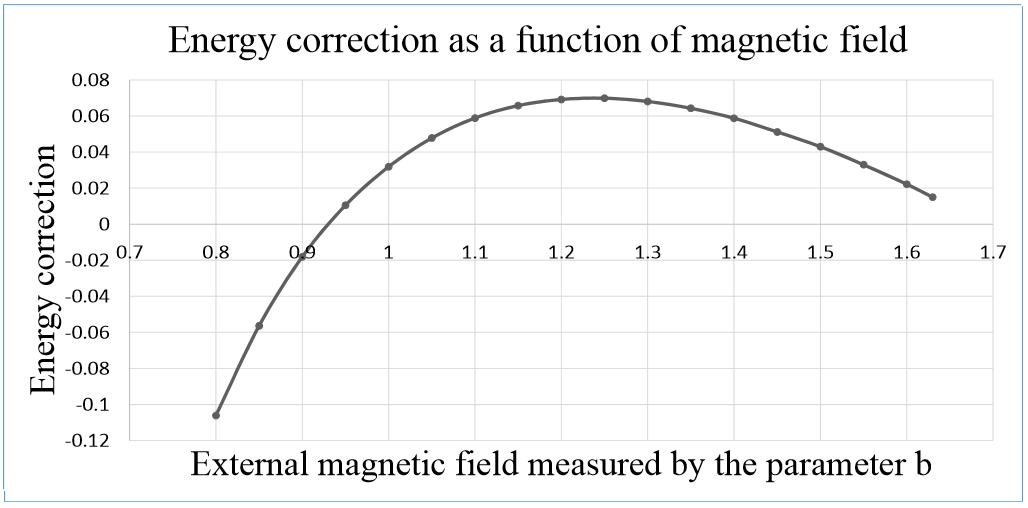}
\caption{Energy correction $C_2 (b)$ from \eqref{eq:lead ord} as a function of magnetic field, measured by the parameter $b$, cf~\eqref{eq:ext field}.}
\end{figure}

\begin{argument}
Everything boils down to solving the 1D minimization problem~\eqref{eq:1D func bis} and inserting the results in~\eqref{eq:curv corr}. We proceed as follows:

\medskip

\noindent $\bullet$ Starting from an initial value for $\alpha$, we minimize $\fone_{0,\alpha}[f] $ as a function of the normal profile $f$. We find it simpler to consider a problem on the full real line, which is equivalent and avoids boundary conditions issues. We thus reflect the harmonic potential and minimize
\begin{multline*}
\int_{-\infty} ^{+\infty} \diff t \big\{ \lf| \partial_t f \ri|^2 + \min\left((t + \alpha )^2,(t-\alpha) ^2 \right) f^2 \\ 
- \tx\frac{1}{2b} \lf(2 f^2 - f^4 \ri) \big\}.
\end{multline*}
Up to the fact that $f$ should not be normalized, this can be seen as the computation of the ground state profile of a Bose-Einstein condensate in the double-well potential $\min\left((t + \alpha )^2,(t-\alpha) ^2 \right) - b ^{-1}$. We rely on the freely available Matlab toolbox GPELab~\cite{GPELab,AntDub-14,AntDub-15}, removing the normalization step imposed therein.

\smallskip

\noindent $\bullet$ Having computed the minimizing profile $f_{0,\alpha}$ at fixed $\alpha$, the minimization in $\alpha$ at fixed $f$ is very easy: the Feynman-Hellmann principle yields 
$$ \int_{0} ^\infty (t + \alpha) f_{0,\alpha} ^2 (t) \diff t = 0$$
and we use this to update the value of $\alpha$.

\smallskip

\noindent $\bullet$ We again minimize the energy with respect to $f$, using the updated value of $\alpha$, and iteratively continue the procedure until convergence is reached.

\smallskip

\noindent $\bullet$ A few benchmarks can be used to ascertain the validity of the calculation. The second identity in~\eqref{eq:lead ord} should always hold for the converged quantities. Next, the two expressions~\eqref{eq:curv corr} and $C_2 (b) = 2 b \fc_{\alO} [f_0]$
should always coincide. Finally, we checked that the cost function $\fO ^2(t)  + F_0 (t) $ is always positive for $1 < b < \theo ^{-1}$.

\medskip

For $b$ very close to $\theo^{-1}$, the numerics become rather unstable, for then $\fO$ takes very small values. This regime would be better analyzed by solving the lowest eigenvalue problem for the linearized equation. We have not pursued this beyond the available studies (e.g.~\cite{Bonnaillie-08}), for the result we aim at in this regime is already included in~\cite[Lemma~2.3]{CorRou-16b}. Here, perturbation theory in $|b-\theo ^{-1}|$ proves that there exists $\delta >0$ such that $C_2 (b) > 0$ for any $b$ satisfying $\theo ^{-1} - \delta \leq b \leq \theo ^{-1}$. This does not settle the question analytically for the full parameter regime of interest but allows to bypass the numerical instability issue for $b \to \theo ^{-1}.$

The values of the energy corrections $C_2 (b)$ we found are plotted in Figure~1. The result is clearly positive for $1<b<\theo ^{-1}$, as claimed in the statement. The numerical problem can also be tackled for $b<1$, but this is less relevant physically, for then bulk superconductivity starts challenging surface superconductivity, which is not taken into account in the reduced model we solve. 
\end{argument}

\section{Remarks on the density profile}

As an illustration, we provide plots of the optimal density profile at $b=1$ and $b=1.5$ in Figure~2. According to the  discussion in Section~\ref{sec:shape ind}, they give a very good approximation for the dependence of the (amplitude of the) order parameter in the direction normal to the boundary. In this respect, it is interesting to note that the density is not monotonously decreasing as a function of the distance to the boundary. This can be seen analytically because
$$ f_0' (0) = 0, \quad f_0'' (0) = f_0(0) \left(b ^{-1} - \alO ^2\right) \geq 0$$
where the first equation is the Neumann boundary condition and the second uses the variational equation for $f_0$ (Euler-Lagrange equation associated to~\eqref{eq:1D func bis}) together with the known fact~\cite[Proof of Lemma~3.3]{CorRou-14} that $f_0^2 (0) = 2 - 2 b \alO ^2 \geq 0$. Thus, $f_0$ increases close to the origin, whereas it must ultimately decay, so that it has to reach at least one global maximum outside the origin. Since superconductivity is due to boundary effects in this regime, it is rather counter-intuitive that the maximum density of Cooper pairs is not attained exactly at the boundary.

In our simulations, this drop of density at the boundary is hardly visible for values of $b$ close to the third critical field (see the curve for $b=1.5$ in Figure~2) but becomes more pronounced for $b$ close to $1$, i.e. close to the second critical field (as per~\eqref{eq:ext field}). It might thus be a precursor of the emergence of bulk superconductivity in the sample, when the second critical field is crossed from above.  

Note that the maximum density not occuring at the boundary is a nonlinear effect: the linearized model valid close to $\Hccc$ leads to a monotonously decreasing density. Indeed, close to $\Hccc$ the density is found by minimizing~\eqref{eq:1D func bis} without nonlinear term, under a mass unit mass constraint~\cite{JamGen-63,FouHel-10}. This leads to the variational equation 
$$ - f'' + (t+\alpha)^2 f= \lambda f$$
and it is known~\cite[Chapter~3]{FouHel-10} that the value of $\alpha$ giving the minimal energy is such that $\alpha  = - \sqrt{\lambda} = -\sqrt{\theo}$, so that one gets 
$$ f''(t) = tf(t) \left(t - 2 \sqrt{\theo}\right). $$
Since the minimizing $f$ must be positive, the above changes sign only once on the positive half-axis, and this is incompatible with a global maximum occurring away from the origin. 

Since for $b\to \theo^{-1}$ the minimizer of~\eqref{eq:1D func bis} gets closer and closer to a multiple of the linear solution, it is natural that the density maximum away from the origin is less visible in that limit.

\begin{figure}\label{fig:dens}
\includegraphics[width=7cm]{./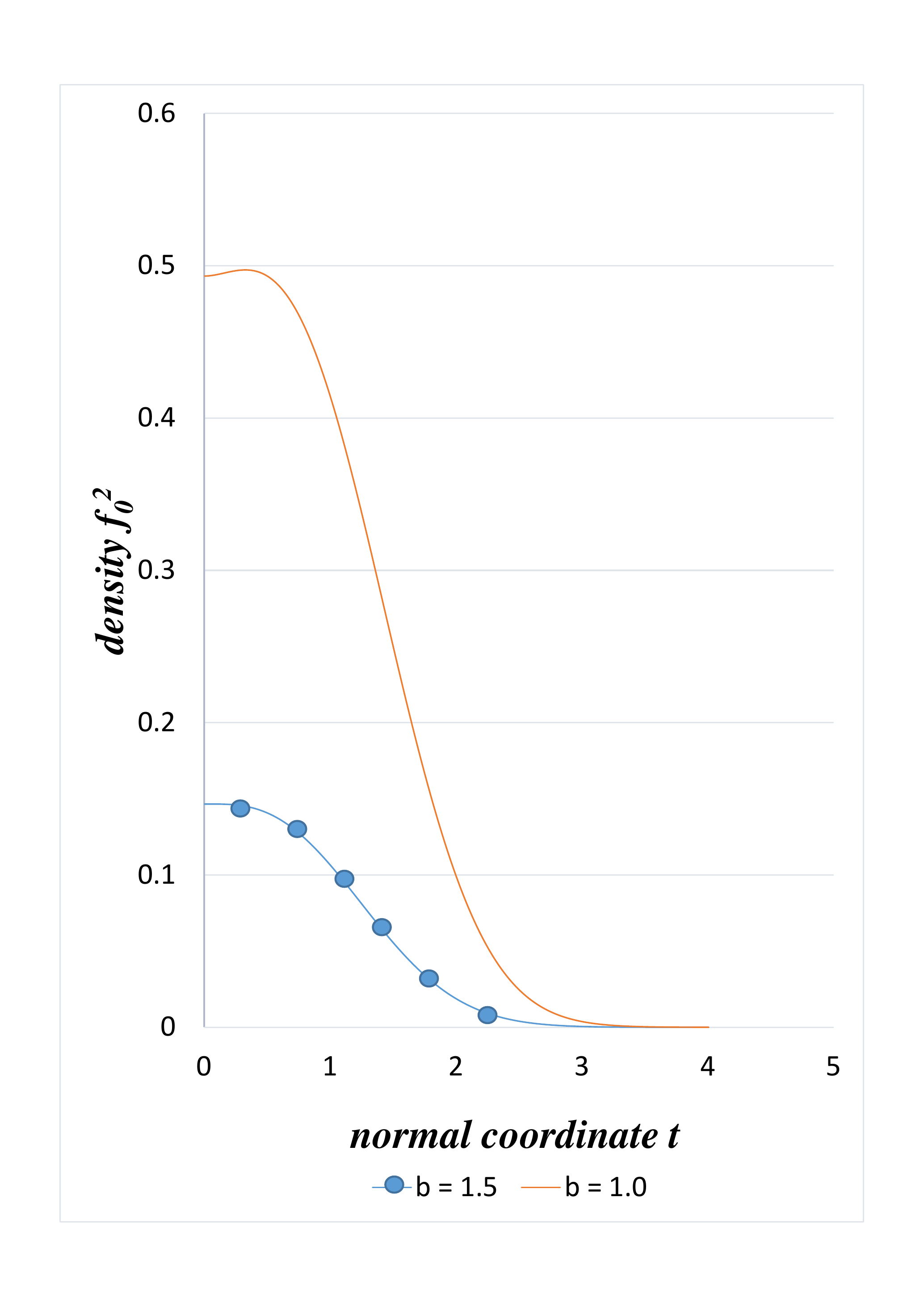}
\caption{Optimal density profile $\fO ^2 (t)$ for the 1D reduced problem~\eqref{eq:1D func bis} as a function of the scaled normal coordinate, counted inwards. Orange solid curve for $b=1$ and blue solid curve with dots $b=1.5$. The corresponding values of $\alO$ are respectively 0.153 and 0.786.}
\end{figure}
%
%

\section{Conclusions}

We studied the surface superconducting regime of the usual two-dimensional GL theory for a very elongated cylinder in a magnetic field perpendicular to its cross-section (also valid for a thin film with perpendicular field). Our results hold in the limit of a large GL parameter, in the full regime of applied field between the second and third critical values $\Hcc$ and $\Hccc$. In particular, we showed that:

\smallskip

\noindent\textbf{1.} To leading order, the distribution of superconducting electrons along the sample's boundary is independent of its shape. The boundary can in first approximation be treated as a single straight line, and the density of Cooper pairs has significant variations only in the direction perpendicular to it. 

\smallskip

\noindent\textbf{2.}  The subleading correction to the distribution is directly proportional to the local value of the boundary's curvature. A larger curvature (counted inwards) corresponds to a larger density of Cooper pairs. This follows by locally approximating the boundary by an auscultating circle.

\smallskip

These results are in accord with what was previously obtained from a linearized analysis close to the upper critical field $\Hccc$ (see~\cite{FouHel-10} for review). They however constitute, as far as we know, the first complete results of their kind valid for all the relevant values of the applied field, where a fully nonlinear analysis is required. 

We remark that our approach requires the boundary curvature to be smooth, i.e. that its variations happen on a scale much larger than the coherence length (proportional to $\kappa^{-1} \propto\eps$). If this assumption is violated at some points of the boundary, it would be appropriate to treat these as corners (where the curvature makes sharp jumps). Such an analysis will be provided elsewhere~\cite{CorGia-17a,CorGia-17b}. 

As directions for future investigations we mention the case where the magnetic field is not strictly perpendicular to the cross-section (or the film), and a more complete 3D treatment in case the wire is bent for example. In the latter case, it is known that superconductivity appears first, when decreasing the external field below $\Hccc$, in regions where the magnetic field is tangent to the boundary~\cite{FouHel-10}. In the regime of our interest, superconductivity should however be present along the whole surface. The first question to tackle then is certainly the influence of the angle between the magnetic field and the cross-section. Results in this direction may be found in~\cite{FouKacPer-13}. In a full 3D treatment, whether the local (Gaussian) curvature of the boundary rather than the 2D cross-sectional curvature influences the Cooper pairs' distribution remains an open question.

\bigskip

\noindent\textbf{Acknowledgments:} M.C. acknowledges the support of MIUR through the FIR grant 2013 ``Condensed Matter in Mathematical Physics (Cond-Math)'' (code RBFR13WAET). N.R. acknowledges the support of the ANR project Mathostaq (ANR-13-JS01-0005-01). B.D. was supported by the CNRS Inphynity project MaBoLo. We thank Romain Duboscq for helpful advice regarding the GPELab toolbox. 


\begin{thebibliography}{1}%
\makeatletter
\providecommand \@ifxundefined [1]{%
 \@ifx{#1\undefined}
}%
\providecommand \@ifnum [1]{%
 \ifnum #1\expandafter \@firstoftwo
 \else \expandafter \@secondoftwo
 \fi
}%
\providecommand \@ifx [1]{%
 \ifx #1\expandafter \@firstoftwo
 \else \expandafter \@secondoftwo
 \fi
}%
\providecommand \natexlab [1]{#1}%
\providecommand \enquote  [1]{``#1''}%
\providecommand \bibnamefont  [1]{#1}%
\providecommand \bibfnamefont [1]{#1}%
\providecommand \citenamefont [1]{#1}%
\providecommand \href@noop [0]{\@secondoftwo}%
\providecommand \href [0]{\begingroup \@sanitize@url \@href}%
\providecommand \@href[1]{\@@startlink{#1}\@@href}%
\providecommand \@@href[1]{\endgroup#1\@@endlink}%
\providecommand \@sanitize@url [0]{\catcode `\\12\catcode `\$12\catcode
  `\&12\catcode `\#12\catcode `\^12\catcode `\_12\catcode `\%12\relax}%
\providecommand \@@startlink[1]{}%
\providecommand \@@endlink[0]{}%
\providecommand \url  [0]{\begingroup\@sanitize@url \@url }%
\providecommand \@url [1]{\endgroup\@href {#1}{\urlprefix }}%
\providecommand \urlprefix  [0]{URL }%
\providecommand \Eprint [0]{\href }%
\providecommand \doibase [0]{http://dx.doi.org/}%
\providecommand \selectlanguage [0]{\@gobble}%
\providecommand \bibinfo  [0]{\@secondoftwo}%
\providecommand \bibfield  [0]{\@secondoftwo}%
\providecommand \translation [1]{[#1]}%
\providecommand \BibitemOpen [0]{}%
\providecommand \bibitemStop [0]{}%
\providecommand \bibitemNoStop [0]{.\EOS\space}%
\providecommand \EOS [0]{\spacefactor3000\relax}%
\providecommand \BibitemShut  [1]{\csname bibitem#1\endcsname}%
\let\auto@bib@innerbib\@empty
\bibitem [{Note1()}]{Note1}%
  \BibitemOpen
  \bibinfo {note} {It can be shown~\cite {CorGia-17a} that, as for regular
  samples, the energy as well as the the density of Cooper pairs is universal
  and unaffected by the presence of corners to leading order. The corrections
  due to corners are expected to show up in the next-to-leading order
  terms~\cite {CorGia-17b}.}\BibitemShut {Stop}%
\end{thebibliography}%


\begin{thebibliography}{10}

\bibitem{AlaBroGal-10}
{\sc S.~Alama, L.~Bronsard, and B.~Galv\~{a}o Sousa}, {\em {Thin film limits
  for Ginzburg--Landau with strong applied magnetic fields}}, SIAM Jour. of
  Math. Anal., 42 (2010), pp.~97--124.

\bibitem{Almog-04}
{\sc Y.~Almog}, {\em Nonlinear surface superconductivity in the large $ \kappa
  $ limit}, Rev. Math. Phys., 16 (2004), pp.~961--976.

\bibitem{AlmHel-06}
{\sc Y.~Almog and B.~Helffer}, {\em The distribution of surface
  superconductivity along the boundary: on a conjecture of {X.B. Pan}}, SIAM
  Journal on Mathematical Analysis, 38 (2006), pp.~1715 -- 1732.

\bibitem{GPELab}
{\sc X.~Antoine and R.~Duboscq}.
\newblock http://gpelab.math.cnrs.fr/.

\bibitem{AntDub-14}
\leavevmode\vrule height 2pt depth -1.6pt width 23pt, {\em {GPELab, a Matlab
  Toolbox to solve Gross-Pitaevskii Equations I: Computation of Stationary
  Solutions}}, Computer Physics Communications, 185 (2014), pp.~2969--2991.

\bibitem{AntDub-15}
\leavevmode\vrule height 2pt depth -1.6pt width 23pt, {\em {GPELab, a Matlab
  Toolbox to Solve Gross-Pitaevskii Equations II: Dynamics and Stochastic
  Simulations}}, Computer Physics Communications, 193 (2015), pp.~95--117.

\bibitem{BerSte-98}
{\sc A.~Bernoff and P.~Sternberg}, {\em Onset of superconductivity in
  decreasing fields for general domains}, J. Math. Phys., 39 (1998),
  pp.~1275--1284.

\bibitem{Bonnaillie-08}
{\sc V.~Bonnaillie-{No\"el}}, {\em Numerical estimates of characteristic
  parameters $\theo$ and $\phi(0)$ for superconductivity}.
\newblock IRMAR preprint 08-14, 2008.

\bibitem{BonDau-06}
{\sc V.~Bonnaillie-{No\"el} and M.~Dauge}, {\em Asymptotics for the low-lying
  eigenstates of the schr�dinger operator with magnetic field near corners},
  Ann. Henri Poincar\'e, 7, pp.~899--931.

\bibitem{BonFou-07}
{\sc V.~Bonnaillie-{No\"el} and S.~Fournais}, {\em Superconductivity in domains
  with corners}, Rev. Math. Phys., 19, pp.~607--637.

\bibitem{CKKSSS05}
{\sc S.~Casalbuoni, E.~Knabbe, J.~Kötzler, L.~Lilje, L.~von Sawilski,
  P.~Schmüser, and B.~Steffen}, {\em Surface superconductivity in niobium for
  superconducting {RF} cavities}, Nuclear Instruments and Methods in Physics
  Research Section A: Accelerators, Spectrometers, Detectors and Associated
  Equipment, 538 (2005), pp.~45 -- 64.

\bibitem{CorGia-17b}
{\sc M.~Correggi and E.~L. Giacomelli}, in preparation.

\bibitem{CorGia-17a}
\leavevmode\vrule height 2pt depth -1.6pt width 23pt, {\em Surface
  superconductivity in presence of corners}, Reviews in Mathematical Physics,
  29 (2017), p.~1750005.

\bibitem{CorRou-14}
{\sc M.~{Correggi} and N.~{Rougerie}}, {\em On the {Ginzburg-Landau} functional
  in the surface superconductivity regime}, Communications in Mathematical
  Physics, 332 (2014), pp.~1297 -- 1343.

\bibitem{CorRou-16}
\leavevmode\vrule height 2pt depth -1.6pt width 23pt, {\em {Boundary behavior
  of the Ginzburg-Landau order parameter in the surface superconductivity
  regime}}, {Arch. Ration. Mech. Anal.}, 219 (2016), pp.~553--606.

\bibitem{CorRou-16b}
\leavevmode\vrule height 2pt depth -1.6pt width 23pt, {\em Effects of boundary
  curvature on surface superconductivity}, Letters in Mathematical Physics, 106
  (2016), p.~445.

\bibitem{deGennes-66}
{\sc P.-G. de~Gennes}, {\em Superconductivity of Metals and Alloys}, Westview
  Press, 1966.

\bibitem{FomDevMos-98}
{\sc V.~M. Fomin, J.~T. Devreese, and V.~V. Moshchalkov}, {\em Surface
  superconductivity in a wedge}, Europhys. Lett., 42 (1998), p.~553.

\bibitem{FomMisDevMos-98}
{\sc V.~M. Fomin, V.~R. Misko, J.~T. Devreese, and V.~V. Moshchalkov}, {\em
  Superconducting mesoscopic square loop}, Phys. Rev. B, 58 (1998), p.~11703.

\bibitem{FouHel-05}
{\sc S.~Fournais and B.~Helffer}, {\em Energy asymptotics for type ii
  superconductors}, Calc. Var. Partial Differential Equations, 24 (2005),
  pp.~341--376.

\bibitem{FouHel-06}
\leavevmode\vrule height 2pt depth -1.6pt width 23pt, {\em On the third
  critical field in {Ginzburg-Landau} theory}, Comm. Math. Phys., 266 (2006),
  pp.~153--196.

\bibitem{FouHel-10}
\leavevmode\vrule height 2pt depth -1.6pt width 23pt, {\em Spectral Methods in
  Surface Superconductivity}, Progress in Nonlinear Differential Equations and
  their Applications, 77, Birkh\"auser Boston, Inc., Boston, MA, 2010.

\bibitem{FouHelPer-11}
{\sc S.~Fournais, B.~Helffer, and M.~Persson}, {\em {Superconductivity between
  $ H_{c_2} $ and $ H_{c_3} $}}, J. Spectr. Theory, 1 (2011), pp.~273--298.

\bibitem{FouKac-11}
{\sc S.~Fournais and A.~Kachmar}, {\em Nucleation of bulk superconductivity
  close to critical magnetic field}, Advances in Mathematics, 226 (2011),
  pp.~1213 -- 1258.

\bibitem{FouKacPer-13}
{\sc S.~Fournais, A.~Kachmar, and M.~Persson}, {\em The ground state energy of
  the three dimensional ginzburg-landau functional. part ii. surface regime.},
  J. Math. Pures Appl., 99 (2013), pp.~343--374.

\bibitem{FraHaiSeiSol-12}
{\sc R.~L. {Frank}, C.~{Hainzl}, R.~{Seiringer}, and J.~P. {Solovej}}, {\em
  {Microscopic Derivation of Ginzburg-Landau Theory}}, J. Amer. Math. Soc., 25
  (2012), pp.~667--713.

\bibitem{FraHaiSeiSol-16}
\leavevmode\vrule height 2pt depth -1.6pt width 23pt, {\em {The external field
  dependence of the BCS critical temperature}}, Comm. Math. Phys., 342 (2016),
  pp.~189--216.

\bibitem{Gorkov-59}
{\sc L.~Gork'ov}, {\em {Microscopic derivation of the Ginzburg-Landau equations
  in the theory of superconductivity}}, Soviet Phys. JETP, 9 (1959),
  pp.~1364--1367.

\bibitem{HelMor-04}
{\sc B.~Helffer and A.~Morame}, {\em Magnetic bottles for the {N}eumann
  problem: curvature effect in the case of dimension 3 (general case)}, Ann.
  Ec. Nor. Sup., 37 (2004), pp.~105--170.

\bibitem{JadRubSte-99}
{\sc H.~T. Jadallah, J.~Rubinstein, and P.~Sternberg}, {\em Phase transition
  curves for mesoscopic superconducting samples}, Phys. Rev. Lett., 82 (1999),
  p.~2935.

\bibitem{Leggett-06}
{\sc A.~Leggett}, {\em Quantum Liquids}, Oxford University Press, 2006.

\bibitem{LuPan-99}
{\sc L.~Lu and X.~Pan}, {\em Estimates of the upper critical field for the
  Ginzburg-Landau equations of superconductivity}, Physica D, 127 (1999),
  pp.~73--104.
  
\bibitem{Letal}
{\sc I. Lukyanchuk,	V.M. Vinokur,	A. Rydh,	R. Xie,	M.V. Milo\ifmmode \check{s}\else \v{s}\fi{}evi\ifmmode \acute{c}\else \'{c}\fi{},	U. Welp,	M. Zach,	Z.L. Xiao,	G.W. Crabtree,	S.J. Bending,	F.M. Peeters, W. K. Kwok}, {\em Rayleigh instability of confined vortex droplets in critical superconductors}, Nature Physics, 11 (2015), pp.~21--25.  

\bibitem{MGSJ95}
{\sc V.~V. Moshchakov, L.~Gielen, C.~Strunk, R.~Jonckheere, and E.~al}, {\em
  Effect of sample topology on the critical fields of mesoscopic
  superconductors}, Nature, 373 (1995), p.~319.

\bibitem{Metal}
{\sc A. M\"uller, M.V. Milo\ifmmode \check{s}\else \v{s}\fi{}evi\ifmmode \acute{c}\else \'{c}\fi{}, S.E.C. Dale, M.A. Engbarth, S.J. Bending}, {\em Magnetization Measurements and Ginzburg-Landau Simulations of Micron-Size $\ensuremath{\beta}$-Tin Samples: Evidence for an Unusual Critical Behavior of Mesoscopic Type-I Superconductors}, {Phys. Rev. Lett.}, {109} (2012), p.~197003.
 

\bibitem{Ning-09}
{\sc Y.~Ning, C.~Song, Z.~Guan, X.~Ma, X.~Chen, J.~Jia, and Q.~Xue}, {\em
  Observation of surface superconductivity and direct vortex imaging of a {Pb}
  thin island with a scanning tunneling microscope}, Europhys. Lett., 85
  (2009), p.~27004.

\bibitem{Note1}
It can be shown~\cite {CorGia-17a} that, as for regular samples, the energy as
  well as the the density of Cooper pairs is universal and unaffected by the
  presence of corners to leading order. The corrections due to corners are
  expected to show up in the next-to-leading order terms~\cite {CorGia-17b}.

\bibitem{Pan-02}
{\sc X.-B. Pan}, {\em Surface superconductivity in applied magnetic fields
  above {$H_{c 2}$}}, Communications in Mathematical Physics, 228 (2002),
  p.~327.

\bibitem{RS67}
{\sc R.~W. Rollins and J.~Silcox}, {\em {Nature of the ac Transition in the
  Superconducting Surface Sheath in Pb-2{\mbox{\%}} In}}, Phys. Rev., 155
  (1967), pp.~404--418.

\bibitem{Retal03}
{\sc A.~Rydh, U.~Welp, J.~M. Hiller, A.~E. Koshelev, W.~K. Kwok, G.~W.
  Crabtree, K.~H.~P. Kim, K.~H. Kim, C.~U. Jung, H.-S. Lee, B.~Kang, and S.-I.
  Lee}, {\em Surface contribution to the superconducting properties of
  ${\mathrm{mgb}}_{2}$ single crystals}, Phys. Rev. B, 68 (2003), p.~172502.

\bibitem{JamGen-63}
{\sc D.~Saint-James and P.~de~Gennes}, {\em Onset of superconductivity in
  decreasing fields}, Phys. Lett., 7 (1963), pp.~306--308.

\bibitem{SchPee-98}
{\sc V.~A. Schweigert and F.~M. Peeters}, {\em Phase transitions in thin
  mesoscopic superconducting disks}, Phy. Rev. B, 57 (1998), p.~13817.

\bibitem{SchPee-99}
\leavevmode\vrule height 2pt depth -1.6pt width 23pt, {\em Influence of the
  confinement geometry on surface superconductivity}, Phy. Rev. B, 60 (1999),
  pp.~3084--3087.

\bibitem{SPSKC64}
{\sc M.~Strongin, A.~Paskin, D.~G. Schweitzer, O.~F. Kammerer, and P.~P.
  Craig}, {\em {Surface superconductivity in Type I and Type II
  superconductors}}, Phys. Rev. Lett., 12 (1964), pp.~442--444.

\bibitem{TilTil-90}
{\sc D.~R. Tilley and J.~Tilley}, {\em Superfluidity and Superconductivity},
  Institute of Physics publishing, Bristol and Philadelphia, 1990.

\bibitem{Tinkham-75}
{\sc M.~Tinkham}, {\em Introduction to Superconductivity}, McGraw-Hill, New
  York, 1975.

\bibitem{TGLS10}
{\sc M.~I. Tsindlekht, V.~M. Genkin, G.~I. Leviev, and N.~Y. Shitsevalova},
  {\em Surface superconducting states in a yttrium hexaboride single crystal in
  a tilted magnetic field}, Journal of Physics: Condensed Matter, 22 (2010),
  p.~095701.

\bibitem{TLGFMA06}
{\sc M.~I. Tsindlekht, G.~I. Leviev, V.~M. Genkin, I.~Felner, P.~Mikheenko, and
  J.~S. Abell}, {\em {Surface superconducting states in a polycrystalline
  $\mathrm{Mg}{\mathrm{B}}_{2}$ sample}}, Phys. Rev. B, 74 (2006), p.~132506.

\end{thebibliography}

\end{document}